\documentclass[a4paper,12pt]{article}
\usepackage{amssymb,amsfonts,amsmath,mathtext,cite,enumerate,float}
\usepackage{amssymb,amsthm,latexsym,euscript,amscd, authblk}
\usepackage{braket}

\usepackage[english]{babel}
\usepackage[utf8]{inputenc}

\newtheorem{proposition}{Proposition}

\newtheorem{theorem}{Theorem}

\title{On perturbations of dynamical semigroups defined by covariant completely positive measures on the semi-axis}

\author[1]{G.G. Amosov\thanks {gramos@mi-ras.ru}}

\affil[1] {Steklov Mathematical Institute of Russian Academy of Sciences, Gubkina str., 8, Moscow 119991, Russia}

%\affil[2] {Moscow Institute of Physics and Technology, 9 Institutski per., Dolgoprudny, 141701, Russia}
%\affil[3] {Institute of Mathematics, Ufa Federal Research Center of Russian Academy of Sciences, 112 Chernyshevskii st., Ufa, 450077, Russia}

\begin{document}
	
	\maketitle
	
	\begin{abstract}
     We consider perturbations of dynamical semigroups on the algebra of all bounded operators in a Hilbert space generated by covariant completely positive measures on the semi-axis. The construction is based upon unbounded linear perturbations of generators of the preadjoint semigroups on the space of nuclear operators. As an application we construct a perturbation of the semigroup of non-unital *-endomorphisms on the algebra of canonical anticommutation relations resulting in the flow of shifts.
	\end{abstract}
	
	Keywords: perturbations of dynamical semigroups, covariant completely positive measures on the semi-axis, the flow of shifts on the algebra of canonical anticommutation relations

\section{Introduction}

The theory of one-parameter $C_0$-semigroups (strong continuous) of linear transformations $T_t:X\to X,\ t\ge 0,$ on the Banach space $X$ introduced in the pioneering papers \cite {Hille1948, Yosida1948} states the conditions for the closed linear operator $\mathcal L$ with a dense domains $D({\mathcal L})\subset X$ to be a generator of $T=\{T_t,\ t\ge 0\}$ such that $T_t=exp(t\mathcal {L}),\ t\ge 0$. If $X=\mathfrak {S}_1(H)$ is the Banach space of nuclear operators in a Hilbert space $H$  the claim of strong continuity for orbits of $T$ possessing the property of non-increasing a trace is equivalent to weak continuity that is $Tr(T_t(\rho )x)\to Tr(\rho x)$ if $t\to 0$ for all $\rho \in \mathfrak {S}_1(H)$ and $x\in B(H)$ (the algebra of all bounded operators in $H$) \cite {DAntonio1967}. In \cite{Holevo1995} it was shown that the perturbation of the generator $\mathcal L$ by a linear map $\Delta $ satisfying some additional conditions can be represented in the form of integral equation including the operator-valued measure generated by $\Delta $.  Together with a perturbation of $T$ it is naturally to consider the corresponding perturbation of the adjoint semigroup $T^*=\{T_t^*,\ t\ge 0\}$ on the algebra $B(H)$. We realize this construction starting directly with the measure.  As an example we construct a perturbation of
the semigroup of non-unital *-endomorphisms on the algebra of canonical anticommutation relations (CAR). As a result of this perturbation we obtain the flow of shifts on the CAR algebra \cite {Powers1988}. Earlier we have announce our result for the CAR algebra in \cite {Kholmogorov}.  Note that the perturbations of a semigroup on $B(H)$ generated by the perturbation of the generator $\mathcal L$ of the corresponding preadjoint semigroup on $\mathfrak {S}_1(H)$ by a linear map with the domain containing $D({\mathcal L})$ form the basis for the construction of non-standard quantum dynamical semigroups \cite {Holevo2017, Holevo2018}.

\section {Preliminaries}

Let $\Psi :\mathfrak {S}_1(H)\to \mathfrak {S}_1(H)$ be a linear bounded map on the Banach space of nuclear operators $\mathfrak {S}_1(H)$ in a Hilbert space $H$. Since $(\mathfrak {S}_1(H))^*=B(H)$ (the algebra of all bounded operators in $H$) the adjoint map $\Phi=\Psi ^*:B(H)\to B(H)$ is weak* continuous. $\Psi $ is said to be preadjoint of $\Phi $ and denoted $\Psi =\Phi _*$. A one-parameter family of linear *-maps $\Phi _t:B(H)\to B(H),\ t\ge 0,$ is said to be a dynamical semigroup if

(i) $\Phi _{t+s}=\Phi _t\circ \Phi _s,\ t,s\ge 0,\ \Phi _0=Id$;

(ii) each $\Phi _t$ is completely positive and $\Phi _t(I)\le I,\ t\ge 0$;

(iii) $Tr(\rho \Phi _t(x))$ is continuous in $t$ for all $x\in B(H),\ \rho \in \mathfrak {S}_1(H)$.

\noindent
In the case, $\{\Psi _t=(\Phi _t)_*,\ t\ge 0\}$ is a $C_0$-semigroup on $\mathfrak {S}_1(H)$ with the property $Tr(\Psi _t(\rho ))\le Tr(\rho )$ for all $\rho >0,\ \rho \in \mathfrak {S}_1(H)$. Thus, there is a generator $\mathcal L$ with the dense domain $D(\mathcal L)\subset \mathfrak {S}_1(H)$ such that $\rho _t=\Phi _t(\rho ),\ \rho \in \mathfrak {S}_1(H)$ is a solution to the Cauchy problem
$$
\frac {d\Psi _t(\rho _t)}{dt}={\mathcal L}(\rho _t),\ t>0,
$$
$$
\rho _0=\rho .
$$
Following to \cite {Holevo1995} let us define a perturbation of $\mathcal L$ of the form
$$
\check {\mathcal L}=\mathcal {L}+\Delta ,
$$
where the linear map $\Delta: dom{\mathcal L}\to \mathfrak {S}_1(H)$ satisfies the properties

(i) for any positive definite matrix $||\rho _{jk}||,\ \rho _{jk}\in dom{\mathcal L}$ the matrix $||\Delta (\rho _{jk})||$ is
positive definite;

(ii) 
$
Tr(\Delta (\rho ))\le -Tr({\mathcal L}(\rho )),\ \rho \in dom{\mathcal L},\ \rho > 0.
$

Consider the measure ${\mathcal M}_*$ with values in the set of complete positive maps on $\mathfrak {S}_1(H)$ determined by the formula
\begin{equation}\label{measure}
{\mathcal M}_*([t,s))=\int \limits _t^s\Delta \circ \Psi _rdr.
\end{equation} 
By a construction the measure (\ref {measure}) satisfies the relation
\begin{equation}\label{trans1}
{\mathcal M}_*([t,s))\circ \Psi _r={\mathcal M}_*([t+r,s+r)),\ s,t,r\ge 0.
\end{equation}

Then \cite {Holevo1995} the equation 
\begin{equation}\label{integral}
\frac {d}{dt}Tr(\rho \check \Phi _t(x))=Tr(({\mathcal L}+\Delta )(\rho )\check \Phi_t(x)),\ \rho \in dom{\mathcal L},\ x\in B(H),
\end{equation}
is equivalent to the integral equation
\begin{equation}\label{integral2}
\check \Phi _t-\int \limits _{0}^t{\mathcal M}(dt)\circ \check\Phi _{t-s}=\Phi _t,\ t\ge 0,
\end{equation}
where the measure ${\mathcal M}$ consists of maps on $B(H)$ adjoint to (\ref {measure}). 
Due to (\ref {trans1}) the measure ${\mathcal M}$ has the covariant property
\begin{equation}\label{trans2}
\Phi _r\circ {\mathcal M}([t,s))={\mathcal M}([t+r,s+r)),\ t,s,r\ge 0.
\end{equation}
Given two completely positive maps $\Theta _1$ and $\Theta _2$ on $B(H)$ we shall use the notation
$$
\Theta _1\succ \Theta _2
$$
iff
$$
\Theta _1-\Theta _2
$$
is completely positive.
In \cite{Holevo1995} it was considered the measures satisfying (\ref {trans2}) generated by the formula
\begin{equation}\label{1995}
{\mathcal M}([t,s))=\Phi _t\circ \Theta -\Phi _s\circ \Theta ,
\end{equation}
where $\Theta $ is the excessive completely positive map in the sense that
$$
\Theta \succ \Phi _t\circ \Theta ,\ t>0.
$$ 
In the case, it was shown that (\ref {integral}) has a unique minimal solution $\check \Phi^{\infty }_t$ possessing the property that any other solution $\check \Phi_t$ satisfies
$$
\check \Phi _t\succ \check \Phi ^{\infty }_t,\ t\ge 0.
$$

\section {Perturbations generated by measures}

We consider measures ${\mathcal M}$ on Borel subsets of the semi-axis ${\mathbb R}_+$ such that given $0\le t\le s\le +\infty $

(i) {${\mathcal M}([t,s)):\ {B}(H)\to B(H)$ is a completely positive linear map;}

(ii) ${\mathcal M}([t,s))(I)\le I$;

(iii) $\mu _{\rho ,x}([t,s))=Tr(\rho \mathcal {M} ([t,s))(x))$ is a $\sigma $-additive measure on $\mathbb {R}_+$ for any fixed $\rho \in \mathfrak {S}_1(H)$ and $x\in B(H)$.

Together with $\mathcal {M} ([t,s))$ it is naturally to examine a preadjoint map $\mathcal {M} _*([t,s)):\mathfrak {S}_1(H)\to \mathfrak {S}_1(H)$ possessing the property $Tr({\mathcal M}_*([t,s))(\rho ))\le Tr(\rho )$ for all positive $\rho \in \mathfrak {S}_1(H)$.

Notice that if additionally $\mathcal {M} ({\mathbb R})(I)=I$, then $\mathcal M$ is said to be a completely positive instrument \cite {Davies}. Nevertheless this property should not take place for our purposes.

The measure $\mathcal M$ is said to be covariant with respect to dynamical semigroup $\Phi $ if (\ref {trans2}) holds true. 
It is not clear whether each covariant measure $\mathcal M$ satisfying (\ref {trans2}) can be obtained from some excessive map $\Theta $ by means of (\ref {measure}). Developing the techniques of \cite {Holevo1995} we suggest replacing (\ref {1995}) to
an arbitrary measure $\mathcal M$ with values in the set of completely positive maps satisfying (\ref {trans2}).

\begin{proposition}\label{1}
Given a meaure $\mathcal M$ covariant with respect to dynamical semigroup $\Phi $ there exists a minimal solution to (\ref {integral2}). 
\end{proposition}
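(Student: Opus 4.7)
The plan is to construct the minimal solution through successive approximations in the completely positive order, in the spirit of Holevo's original argument but starting directly from the covariant measure instead of the representation (\ref{1995}). Set $\check{\Phi}^{(0)}_t := \Phi_t$ and define recursively
\begin{equation*}
\check{\Phi}^{(n+1)}_t := \Phi_t + \int_0^t {\mathcal M}(ds)\circ \check{\Phi}^{(n)}_{t-s},\quad t\ge 0,
\end{equation*}
the integral being interpreted weak*-ly via the scalar measures $\mu_{\rho,x}$ supplied by property (iii).

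First I would check monotonicity in $\succ$ and complete positivity by induction: each $\check{\Phi}^{(n)}_t$ is CP as a weak*-limit of sums of CP maps, and
$$\check{\Phi}^{(n+1)}_t - \check{\Phi}^{(n)}_t = \int_0^t {\mathcal M}(ds)\circ(\check{\Phi}^{(n)}_{t-s} - \check{\Phi}^{(n-1)}_{t-s})\succ 0,$$
inheriting $\succ 0$ from the previous step, with base case $\check{\Phi}^{(1)}_t - \Phi_t = \int_0^t {\mathcal M}(ds)\circ \Phi_{t-s} \succ 0$. Next I would prove the uniform bound $\check{\Phi}^{(n)}_t(I)\le I$ by induction, using property (ii) of $\mathcal M$ together with the compatibility relation $\Phi_t(I) + {\mathcal M}([0,t))(I)\le I$ that expresses in adjoint form the dissipativity assumption $Tr(\Delta(\rho))\le -Tr({\mathcal L}(\rho))$ sketched in the preliminaries and is preserved by the covariance (\ref{trans2}). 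This yields convergence of each monotone sequence $Tr(\rho\,\check{\Phi}^{(n)}_t(x))$ for $\rho\ge 0$ and $x\ge 0$, hence a weak* limit $\check{\Phi}^{\infty}_t := \sup_n \check{\Phi}^{(n)}_t$ which is completely positive and subunital.

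To confirm that $\check{\Phi}^{\infty}_t$ solves (\ref{integral2}), I would pass to the limit in the recursion by applying the classical monotone convergence theorem to each $\mu_{\rho,x}$, obtaining $\int_0^t {\mathcal M}(ds)\circ \check{\Phi}^{(n)}_{t-s}\nearrow \int_0^t {\mathcal M}(ds)\circ \check{\Phi}^{\infty}_{t-s}$ in the weak* topology, which matches the left-hand side limit $\check{\Phi}^{(n+1)}_t \to \check{\Phi}^{\infty}_t$. Minimality follows by a parallel induction: any CP solution $\check\Phi_t$ of (\ref{integral2}) satisfies $\check\Phi_t\succ\Phi_t$ since the omitted integral is CP, and the step $\check\Phi_t\succ \check\Phi^{(n+1)}_t$ uses that ${\mathcal M}$-integration preserves $\succ$; hence $\check\Phi_t\succ \check\Phi^{\infty}_t$.

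The main obstacle I anticipate is the uniform bound used in the convergence step: a naive application of property (ii) only yields $\check{\Phi}^{(n)}_t(I)\le (n+1)I$, so one must exploit the covariance (\ref{trans2}) and the subunitality of $\Phi_t$ to telescope the iterated integrals of $\mathcal M(ds)$ against $\Phi_{t-s}(I)$ and prevent blow-up. A secondary technical point is justifying the interchange of the weak* supremum with integration against the vector-valued measure $\mathcal M$, which I would handle by reducing throughout to the scalar measures $\mu_{\rho,x}$ and extending by linearity and density in $\mathfrak{S}_1(H)$ and $B(H)$.
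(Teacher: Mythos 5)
Your construction follows essentially the same route as the paper: a Picard-type monotone iteration (the paper writes the recursion as $\Phi^{n+1}_t=\Phi^n_t+\int_0^t\mathcal{M}(dr)\circ\Phi^n_r$, with the same intent as your scheme for (\ref{integral2})), complete positivity and monotonicity in the order $\succ$ by induction, a subunitality bound, and passage to the weak* monotone limit, with minimality obtained by comparing any solution with the iterates. Up to that point your write-up is, if anything, more explicit than the paper's about the convergence and minimality steps.

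The genuine issue is the bound you yourself flag. The paper simply asserts $\Phi^n_t(I)\le I$ ``due to $\Phi_t(I)\le I$ and $\mathcal{M}([t,s))(I)\le I$''; you correctly observe that these two facts alone only give $\check\Phi^{(n)}_t(I)\le (n+1)I$, but your proposed remedy, the compatibility relation $\Phi_t(I)+\mathcal{M}([0,t))(I)\le I$, is not among the hypotheses of the proposition: in this section $\mathcal{M}$ is an \emph{arbitrary} covariant measure satisfying (i)--(iii), and the dissipativity condition $Tr(\Delta(\rho))\le -Tr(\mathcal{L}(\rho))$ belongs only to the $\Delta$-generated setting (\ref{measure}) of the Preliminaries. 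Nor does it follow from covariance: take $\Phi_t(x)=U_txU_t^*$ with $U_t$ the unitary translation group on $L^2(\mathbb{R})$ and $\mathcal{M}([t,s))(x)=\omega(x)\,M_{\chi_{[t,s)}}$, where $\omega$ is a state and $M_{\chi_{[t,s)}}$ is multiplication by the indicator of $[t,s)$; this satisfies (i)--(iii) and (\ref{trans2}), yet $\Phi_t(I)+\mathcal{M}([0,t))(I)=I+M_{\chi_{[0,t)}}\not\le I$, and the iterates are not uniformly subunital (the limit, when it exists, exceeds $I$). So either the compatibility relation must be added as an explicit hypothesis (it does hold in the $\Delta$-generated case and, with equality, in the CAR example of the later sections), or one has to prove boundedness of the monotone sequence for each fixed $t$ by a different, renewal-type argument exploiting that a covariant measure has no mass at $0$; as written, neither your proof nor the paper's closes this step, and your phrase ``is preserved by the covariance (\ref{trans2})'' overstates what covariance gives you.
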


\begin{proof}
Following to \cite {Holevo1995} let us consider the iteration process
$$
\Phi _t^{n+1}=\Phi _t^{n}+\int \limits _0^t\mathcal {M}(dr)\circ \Phi _r^n.
$$
Due to $\Phi _t(I)\le I$ and $\mathcal {M}([t,s))(I)\le I,\ 0\le t\le s\le +\infty$ we get
$$
\Phi _t^{n+1}\succ \Phi _t^n,\ \Phi _t^n(I)\le I.
$$
It results in $\Phi _t^n$ tends to the minimal solution $\check \Phi _t$ of (\ref {integral2}).
\end{proof}

\section{Perturbations of no-event semigroups}

Following to \cite {Holevo2017} $\Psi^0 =\{\Psi ^0_t:\mathfrak {S}_1(H)\to \mathfrak {S}_1(H),\ t\ge 0\}$ is said to be a no-event semigroup if every pure state $\rho =\ket {\psi }\bra {\psi }$ is mapped to a multiple of a pure state. Such a semigroup necessary has the form
$$
\Psi _t(\rho )=T_t\rho T_t^*,\ t\ge 0,
$$
where $T_t=exp(tK)$ is a $C_0$-semigroups of contractions. Hence the generator $K$ is a maximum dissipative operator due to \cite {LF1961}. The generator $\mathcal {L}$ of $\Psi $ is acting by the formula
$$
\mathcal {L}(\rho )=K\rho +\rho K,\ \rho \in D({\mathcal L}),
$$
where the domain $D({\mathcal L})$ includes rank one operators $\ket {\psi }\bra {\xi},\ \psi ,\xi \in D(K)$. 
Take linear operators $L_j:D(K)\to H$ possessing the property
$$
\sum \limits _j||L_j\psi ||^2\le -2Re\braket {\psi ,K\psi }
$$
and define a linear map $\Delta :D({\mathcal L})\to \mathfrak {S}_1(H)$  by the formula
$$
\Delta (\ket {\psi }\bra {\xi})=\sum \limits _j\ket {L_j\psi }\bra {L_j\xi },
$$
$\psi ,\xi \in D(K)$.
Consider the measure determined by (\ref {measure}). Then, it has the form (\ref {1995}), where the excessive completely positive map $\Theta$ is given by the formula (\cite {Holevo1995}, Lemma 2)
$$
\braket {\psi ,\Theta (x)\psi }=\int \limits _0^{+\infty }\sum \limits _j\braket {L_jT_t\psi ,xL_jT_t\psi }dt,\ \psi \in D(K),\ x\in B(H).
$$
Below we shall give an example of perturbation for a no-event semigroup on the algebra of canonical anticommutation relations.

\section{Algebra of canonical anticommutation relations $\mathfrak {A}(H)$}

Here we record the basic concepts about the algebra of canonical anticommutation relations \cite {Brat}.
Let $H$ be a separable infinite dimensional Hilbert space. Fix the orthonormal basis $(\ket {j})_{j=1}^{+\infty }$ in $H$. Then, the antisymmetric Fock space $F(H)$ over one-particle
Hilbert space $H$ is a Hilbert space with the orthonormal basis $\ket {0},\ \ket {j_1\dots j_n}$, where the indices $j_1<j_2<\dots <j_n$ and $n$ run over the set $\{1,2,3,\dots \}$. The vector
$\ket {0}$ is said to be vacuum. Let us define the ladder operators $a_k^{\dag},\ a_k$ by the formula
$$
a_k^{\dag }\ket {j_1\dots j_n}=\left \{\begin{array}{c}(-1)^s\ket {j_1\dots j_{s}kj_{s+1}\dots j_n}\ if\ j_s<k<j_{s+1}\\ 0\hskip 2cm if\ k\in \{j_1,\dots j_n\}\end{array}\right .,
$$ 
$$
a_k\ket {j_1\dots j_n}=\left \{\begin{array}{c}(-1)^{s+1}\ket {j_1\dots j_{s-1}j_{s+1}\dots j_n}\ if\ j_s=k\\0\hskip 2cm if\ k\notin \{j_1,\dots j_n\}.\end{array}\right .,
$$
$$
a_k^{\dag }\ket {0}=\ket {k},\ a_k\ket {0}=0,\ k=1,2,3,\dots 
$$
It follows that
$$
a_ka_k^{\dag}+a_k^{\dag}a_k=I,\ (a_k)^2=(a_k^{\dag })=0,
$$
$$
a_ka_j=-a_ja_k,\ a_k^{\dag }a_j^{\dag }=-a_j^{\dag }a_k^{\dag }.
$$
The $C^*$-algebra $\mathfrak {A}(H)$ generated by the ladder operators is said to be the algebra of canonical anticommutation relations (CAR) in the Fock representation. The CAR algebra $\mathfrak {A}(H)$ is generated by monomials $x_{j_1}\dots x_{j_n}$, where
$j_1<j_2<\dots <j_n$ and $x_{j_s}\in \{a^{\dag }_{j_s},a_{j_s},a_{j_s}a^{\dag }_{j_s}\}$.

Let us define a linear map on rank one operators by the formula
$$
\Xi _*(\ket {j_1\dots j_n}\bra {r_1\dots r_m})=
$$
\begin{equation}\label{ups5}
\sum \limits _{s,k}(-1)^{s+k}\delta _{j_sr_k}\ket {j_1\dots j_{s-1}j_{s+1}\dots j_n}
\bra {r_1\dots r_{k-1}r_{k+1}\dots r_m},
\end{equation}
$$
\Xi _*(\ket {0}\bra {r_1\dots r_m})=\Xi _*(\ket {j_1\dots j_n}\bra {0})=\Xi _*(\ket {0}\bra {0})=0.
$$
Notice that (\ref {ups5}) is the sum of partial traces over minimal subsystems \cite{Filip}.

\begin{proposition}\label{ups3}
Formula (\ref {ups5}) correctly determines a linear map on $\mathfrak {S}_1(F(H))$ which can be uniquely extended  to the completely positive map on $B(F(H))$. This map doesn't have the property of non-increasing a trace.
\end{proposition}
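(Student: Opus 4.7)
My plan is to identify (\ref{ups5}) with the Kraus-type expression
$$\Xi_*(\rho) = \sum_q a_q\rho a_q^{\dag}$$
and then read off all three claims from this representation. A direct calculation on basis rank-one operators shows that $a_q\ket{j_1\dots j_n}$ equals $(-1)^{s+1}\ket{j_1\dots j_{s-1}j_{s+1}\dots j_n}$ when $j_s = q$ and vanishes otherwise, with the analogous statement for the bra: $\bra{r_1\dots r_m}a_q^{\dag}$ equals $(-1)^{k+1}\bra{r_1\dots r_{k-1}r_{k+1}\dots r_m}$ when $r_k = q$. Multiplying the signs and summing over $q$ forces $q = j_s = r_k$ and reproduces exactly the right-hand side of (\ref{ups5}). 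The key observation is that for a fixed basis rank-one operator the sum over $q$ contains at most $\min(n,m)$ nonzero terms, so $\Xi_*$ is unambiguously defined by linear extension on the dense subspace $D \subset \mathfrak{S}_1(F(H))$ spanned by basis rank-one operators.

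Next I would construct the extension to $B(F(H))$ using the dual formula $\Xi(x) = \sum_q a_q^{\dag}xa_q$. Each finite truncation $\Xi^{(N)}(x) = \sum_{q=1}^N a_q^{\dag}xa_q$ is completely positive on $B(F(H))$, being a finite sum of CP maps each with the single Kraus operator $a_q$. For $x\ge 0$ these truncations form a monotone increasing sequence of positive operators whose strong-operator limit defines $\Xi(x)$ as a (possibly unbounded) positive quadratic form. The CP property passes to this limit because positivity of the matrices $[\Xi(x_i^{*}x_j)]_{i,j}$ is preserved under monotone convergence of the truncations. Uniqueness of the extension follows from the trace pairing $\mathrm{Tr}(x\Xi_*(\rho)) = \mathrm{Tr}(\Xi(x)\rho)$ for $\rho\in D$ together with the density of $D$ in $\mathfrak{S}_1(F(H))$.

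For the last assertion, that $\Xi_*$ is not trace non-increasing, a single counterexample suffices. Taking $\rho = \ket{1,2}\bra{1,2}$, the Kraus form immediately gives
$$\Xi_*(\rho) = a_1\ket{1,2}\bra{1,2}a_1^{\dag} + a_2\ket{1,2}\bra{1,2}a_2^{\dag} = \ket{2}\bra{2} + \ket{1}\bra{1},$$
which has trace $2$, while $\mathrm{Tr}(\rho) = 1$. More generally, $\mathrm{Tr}(\Xi_*(\ket{j_1\dots j_n}\bra{j_1\dots j_n})) = \sum_q \|a_q \ket{j_1 \dots j_n}\|^2 = n$, so the trace is amplified by the particle number and can be made arbitrarily large.

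The main obstacle will be making rigorous sense of the \emph{unique completely positive extension to $B(F(H))$}, because $\Xi(I) = \sum_q a_q^{\dag}a_q = N$ is the unbounded number operator, so $\Xi$ is not a bounded map in the usual sense. I would handle this by presenting $\Xi$ as the pointwise monotone limit of the bounded CP maps $\Xi^{(N)}$ and treating the CP condition at the level of matrix positivity, with uniqueness of the extension enforced by its agreement with $\Xi_*$ on the dense subspace $D$ via the trace duality.
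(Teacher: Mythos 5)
Your proposal is correct and follows essentially the same route as the paper: you verify the Kraus representation $\Xi_*(\cdot)=\sum_k a_k(\cdot)a_k^{\dag}$ on basis rank-one operators and then obtain the failure of trace non-increase from $\sum_k a_k^{\dag}a_k$ acting as the particle-number operator on $\ket{j_1\dots j_n}$, exactly as in the paper's proof. The only divergence is that where the paper simply asserts the unique completely positive extension to $B(F(H))$ from the Kraus form, you flesh out this step via the adjoint map $\sum_q a_q^{\dag}xa_q$ with monotone truncations and trace duality, which is a more careful treatment of the same point (and consistent with the paper's convention that maps on $B(F(H))$ arise as adjoints of maps on $\mathfrak{S}_1(F(H))$).
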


\begin{proof}

It is straightforward to check that
$$
\Xi _*(\ket {j_1\dots j_n}\bra {r_1\dots r_m})=\sum \limits _ka_k\ket {j_1\dots j_n}\bra {r_1\dots r_m}a_k^{\dag }.
$$
It follows that $\Xi _*$ can be uniquely extended to a completely positive map on $B(F(H))$.
Denote
$$
Q=\sum \limits _ka_k^{\dag }a_k.
$$
Since
$$
a_k^{\dag }a_k\ket {j_1\dots j_n}=\sum \limits _{s}\delta _{kj_s}\ket {j_1\dots j_n},\ a_k^{\dag }a_k\ket {0}=0
$$
for any $j_1<j_2<\dots <j_n$
we get
$$
Q\ket {j_1\dots j_n}=n\ket {j_1\dots j_n}.
$$
Hence $Tr(\Xi _*(\ket {j_1\dots j_n}\bra {j_1\dots j_n}))=Tr(Q\ket {j_1\dots j_n}\bra {j_1\dots j_n})=n$ and $\Xi _*$ has not the property of non-increasing a trace.

\end{proof}

Given $f=\sum\limits _jc_j\ket {j},\ c_j\in {\mathbb C},\ \sum \limits _j|c_j|^2<+\infty $, let us define the ladder operators $a^{\dag }(f),a(f)$ over $f\in H$ by the formula
$$
a(f)=\sum \limits _j\overline {c_j}a_j,\ a^{\dag }(f)=\sum \limits _{j}c_ja^{\dag }_j
$$
satisfying the relations
$$
a(f)a^{\dag}(g)+a^{\dag}(g)a(f)=\braket {f,g}I,
$$
$$
a(f)a(g)+a(g)a(f)=a^{\dag }(f)a^{\dag }(g)+a^{\dag }(g)a^{\dag }(f)=0.
$$
It follows from the definition that the ladder operators
$$
||a(f)||=||a^{\dag }(f)||=||f||
$$
in the $C^*$-algebra $\mathfrak {A}(H)$.

Let us introduce the outer multiplication $\Lambda $ over indexes $j_1\dots j_n$ such that
if $j_1<j_2<\dots <j_n$, then
$$
j_1\Lambda j_2\Lambda \dots \Lambda j_n=\ket {j_1\dots j_n}
$$
and
$$
j_s\Lambda j_k=-j_k\Lambda j_s.
$$
Following this way, for $f_j=\sum \limits _kc_{jk}\ket {k},\ \sum \limits _k|c_{jk}|^2<+\infty,$ we can put
$$
f_1\Lambda \dots \Lambda f_n=\sum \limits _{k_1,\dots ,k_n}c_{1k_1}\dots c_{nk_n}j_{k_1}\Lambda \dots \Lambda j_{k_n}
$$
Hence, given $f_j,g_k\in H$ we can define vectors $\ket {\bf f}=f_{1}\Lambda \dots \Lambda f_{n},\ket {\bf g}=g_1\Lambda \dots \Lambda g_n\in H^{\otimes _a^{n}}$. Fix $n$ and denote $H^{\otimes _a^n}$ the closed linear envelope of vectors $\ket {\bf f}$ in the Fock space $F(H)$.
Then, a restriction of the inner product in $F(H)$ to $H^{\otimes _a^n}$ reads 
$$
\braket {{\bf f}|\bf {g}}_{H^{\otimes _a^n}}=det||\braket {f_j|g_k}||,
$$
where the outer multiplication $\Lambda $ satisfies the rule
$$
f\Lambda g=-g\Lambda f,\ f,g\in H.
$$

Alternatively we can define the orthogonal projection $P_a$ in a tensor product $H^{\otimes ^n}$ as follows
$$
P_a(f_1\otimes \dots \otimes f_n)=\frac {1}{n!}\sum \limits _{\epsilon \in S_n}(-1)^{|\epsilon|}f_{\epsilon (1)}\otimes \dots \otimes f_{\epsilon(n)},
$$
where the sum is taken over the set of all permutations $S_n$ and $|\epsilon |$ is a signature of permutation $\epsilon \in S_n$. By this way, 
$$
H^{\otimes _a^n}=P_a(H^{\otimes n})
$$
and is said to be an n-th antisymmetric tensor product of $H$.

Given ${\bf f}\in H^{\otimes _a^n}$ we denote
$$
a({\bf f})=a(f_1)\dots a(f_n),\ a^{\dag }({\bf f})=a^{\dag }(f_1)\dots a^{\dag }(f_n).
$$
It follows that
$$
a^{\dag }({\bf f}){\bf g}={\bf f}\Lambda {\bf g}
$$
and
$$
a(f){\bf g}=\sum \limits _k(-1)^{k+1}\braket {f,g_k}g_1\Lambda \dots \Lambda g_{k-1}\Lambda g_{k+1}\Lambda \dots \Lambda g_n,
$$
where $f\in H,\ {\bf g}\in H^{\otimes _a^n}$.

\section{The semigroup of shifts on $\mathfrak {A}(H)$}

Put $H=L^2({\mathbb R}_+)$ and  define the semigroup of shifts in $H$ by the formula
$$
(S_tf)(x)=\left \{\begin{array}{c}f(x-t),\ x>t;\\ 0,\ 0\le x\le t,\end{array}\right .
$$
$t\ge 0,\ f\in H$.
The conjugate semigroup of contractions $S_t^*=e^{td}$ has a generator
$$
df=\frac {df}{dx},\ f\in D(d)=\{f\ |\ f'\in L^2({\mathbb R}_+)\}.
$$
We also need the semigroup of shifts in $F(H)$ obtained by lifting $(S_t)$ as follows
\begin{equation}\label{shift}
\hat {S}_t(f_1\Lambda \dots \Lambda f_n)=S_tf_1\Lambda \dots \Lambda S_tf_n,\ \hat {S}_t\ket{0} =\ket{0} ,
\end{equation}
$t\ge 0,\ f_j\in H$.
The generator $\hat d$ of its conjugate semigroup of contractions $\hat {S}_t^*=e^{it\hat d}$ is given by the formula
\begin{equation}\label{d}
\hat d\ket {\bf f}=\sum \limits _{j=1}^nf_1\Lambda \dots \Lambda f_{j-1}\Lambda df_j\Lambda f_{j+1}\Lambda \dots \Lambda f_n,\ \hat d\ket{0} =0,
\end{equation}
$f_j\in D(d)$. Along (\ref {d}) we need a preconjugate operator acting by the formula
$$
\hat {d}_*\ket {\bf f}=-\sum \limits _{j=1}^nf_1\Lambda \dots \Lambda f_{j-1}\Lambda df_j\Lambda f_{j+1}\Lambda \dots \Lambda f_n,\ \hat {d}_*\ket{0} =0,
$$
$f_j\in D(d_*)=\{f\ |\ f'\in L^2({\mathbb R}_+),\ f(0)=0\}$.
Using (\ref {shift}) it is possible to determine the dynamical semigroup on $B(F(H))$ as follows
\begin{equation}\label{S0}
\Phi _t(x)=\hat S_tx\hat S_t^*,\ 
\end{equation}
$t\ge 0,\ x\in B(F(H))$.
The preadjoint semigroup $\Psi _t:\mathfrak {S}_1(F(H))\to \mathfrak {S}_1(F(H))$ is given by the formula
\begin{equation}\label{S1}
\Psi _t(\rho )=\hat S_t^*\rho \hat S_t,
\end{equation}
$t\ge 0,\ \rho \in \mathfrak {S}_1(F(H))$. Note that (\ref {S1}) can be directly extended to the semigroup of non-unital *-endomorphisms on $B(F(H))$.
The generator $\mathcal L$ of (\ref {S1}) is determined by the formula
$$
\mathcal {L}(\rho )=[\hat d,\rho ]=\hat d\rho -\rho \hat d,
$$
where $\rho$ belongs to the domain $D({\mathcal L})$ which is dense in $\mathfrak {S}_1(F(H))$. 
It is straightforward to see that $D(\mathcal {L})$ contains rank one operators
$$
\ket {{\bf f}}\bra {{\bf g}},\ \ket {{\bf f}}\bra {0},\ \ket {0}\bra {{\bf g}},\ \ket {0}\bra {0},
$$
where $f_j,g_k\in D(d)$.

The semigroup of unital $*$-endomorphisms $\check \Phi _t$ on $\mathfrak {A}(H)$ defined by the relation
\begin{equation}\label{powers}
\check {\Phi} _t(a(f))=a(S_tf),\ t\ge 0,
\end{equation}
is said to be the flow of shifts on the CAR algebra \cite {Powers1988}.
Denote ${\bf f}_{\backslash j}=f_1\Lambda \dots \Lambda f_{j-1}\Lambda f_{j+1}\Lambda \dots \Lambda f_n$ and define
a linear $*$-map on $D(\mathcal L)$ by the formula
\begin{equation}\label{Delta}
\Delta (\ket {\bf f}\bra {\bf g})=\sum \limits _{j,k}(-1)^{j+k}f_j(0)\overline {g}_k(0)\ket {{\bf f}_{\backslash j}}\bra {{\bf g}_{\backslash k}}, 
\end {equation}
$f_j,g_k\in D(d)$.

\begin{theorem}\label {1}
$$
Tr((\mathcal {L}(\ket {\bf f}\bra {\bf g})+\Delta (\ket {\bf f}\bra {\bf g}))a({\bf h})a^{\dag }({\bf e}))=
\bra {\bf f}(a(\hat d_*{\bf h})a^{\dag }({\bf e})+a({\bf h})a^{\dag }(\hat d_*{\bf e}))\ket {\bf g}.
$$
$f_j,g_k\in D(d),\ h_j,e_k \in D(d_*)$.
\end{theorem}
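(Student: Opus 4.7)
The plan is to evaluate both sides of the identity as explicit sesquilinear forms in the vectors $f_j, g_k, h_l, e_m$ and to match them term by term, with the boundary contributions arising from integration by parts on $L^2(\mathbb{R}_+)$ being absorbed into the $\Delta$ piece.

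First I would handle the $\mathcal{L}$-contribution. Writing $\mathcal{L}(\rho)=\hat d\rho-\rho\hat d$ for $\rho=\ket{{\bf f}}\bra{{\bf g}}$, and noting that both ${\bf f}\in D(\hat d)$ and $a({\bf h})a^\dagger({\bf e}){\bf f}\in D(\hat d)$ (all components involved lie in $D(d)\supset D(d_*)$), cyclicity of the trace reduces $Tr(\mathcal{L}(\rho)\,a({\bf h})a^\dagger({\bf e}))$ to a matrix element of the commutator $[\hat d,a({\bf h})a^\dagger({\bf e})]$. The Leibniz rule then splits this into the pieces $[\hat d,a(h_j)]$ and $[\hat d,a^\dagger(e_k)]$. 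A direct calculation on monomial antisymmetric tensors gives $[\hat d,a^\dagger(f)]=a^\dagger(df)$ for every $f\in D(d)$, and the analogous identity $[\hat d,a(f)]=a(df)$ holds provided $f\in D(d_*)$---which is precisely the hypothesis on each $h_j$, so no boundary term is produced at this stage. Using the definition of $\hat d_*$ one then assembles $[\hat d,a({\bf h})a^\dagger({\bf e})]=-a(\hat d_*{\bf h})a^\dagger({\bf e})-a({\bf h})a^\dagger(\hat d_*{\bf e})$.

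Next I would treat the $\Delta$-contribution. Using (\ref{Delta}) together with $Tr(\ket{{\bf u}}\bra{{\bf v}}X)=\bra{{\bf v}}X\ket{{\bf u}}$, the trace $Tr(\Delta(\rho)\,a({\bf h})a^\dagger({\bf e}))$ unfolds into $\sum_{j,k}(-1)^{j+k}f_j(0)\overline{g_k(0)}$ times the matrix elements of $a({\bf h})a^\dagger({\bf e})$ between the reduced tensors ${\bf f}_{\backslash j}$ and ${\bf g}_{\backslash k}$. Integration by parts on $L^2(\mathbb{R}_+)$, in the form $\langle df,g\rangle+\langle f,dg\rangle=-\overline{f(0)}g(0)$, is then used to transfer derivatives across the inner products coming from the $\mathcal{L}$ piece; each such transfer produces a boundary contribution of the form $f_j(0)\overline{g_k(0)}$, and these contributions should cancel exactly against those supplied by $Tr(\Delta(\rho)\,a({\bf h})a^\dagger({\bf e}))$, leaving behind only the bulk terms that repackage into the claimed right-hand side $(a(\hat d_*{\bf h})a^\dagger({\bf e})+a({\bf h})a^\dagger(\hat d_*{\bf e}))$ paired with ${\bf f}$ and ${\bf g}$.

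The main obstacle I anticipate is the combinatorial bookkeeping: once $a^\dagger({\bf e})$ acts by wedging and $a({\bf h})$ contracts the resulting tensor, the matrix elements expand into a large signed sum over ordered pairs of subsets, and one must track how the indices $(j,k)$ together with the factors $(-1)^{j+k}f_j(0)\overline{g_k(0)}$ from $\Delta$ align with the boundary terms produced by integration by parts at the $j$-th slot of ${\bf f}$ and the $k$-th slot of ${\bf g}$. Once this sign and permutation bookkeeping is carried out, the identity follows from the two commutator formulas for $[\hat d,a(\cdot)]$ and $[\hat d,a^\dagger(\cdot)]$ together with the boundary identity for $d$ on $L^2(\mathbb{R}_+)$.
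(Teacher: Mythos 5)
Your two commutator identities ($[\hat d,a^{\dag}(f)]=a^{\dag}(df)$ for $f\in D(d)$ and $[\hat d,a(h)]=a(dh)$ for $h\in D(d_*)$) are correct, but the step that feeds them --- reducing $Tr(\mathcal {L}(\ket {\bf f}\bra {\bf g})a({\bf h})a^{\dag }({\bf e}))$ by ``cyclicity of the trace'' to a matrix element of $[\hat d,a({\bf h})a^{\dag }({\bf e})]$ --- is exactly where the argument breaks. The generator of (\ref {S1}) acts on rank-one operators as $\mathcal {L}(\ket {\bf f}\bra {\bf g})=\ket {\hat d{\bf f}}\bra {\bf g}+\ket {\bf f}\bra {\hat d{\bf g}}$ (this is what the paper's formula (\ref {anar1}) encodes); the writing $\mathcal {L}(\rho )=[\hat d,\rho ]$ is only formal, because replacing $\bra {\hat d{\bf g}}$ by $-\bra {\bf g}\hat d$ costs the boundary form at $0$: $\braket {\hat d{\bf g},\psi }+\braket {{\bf g},\hat d\psi }$ is a sum of boundary terms, nonzero when the $g_k$ do not vanish at the origin. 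If you take the literal commutator and move $\hat d$ through the trace, your first paragraph yields $Tr(\mathcal {L}(\rho )a({\bf h})a^{\dag }({\bf e}))=-\bra {\bf g}[\hat d,a({\bf h})a^{\dag }({\bf e})]\ket {\bf f}=\bra {\bf g}(a(\hat d_*{\bf h})a^{\dag }({\bf e})+a({\bf h})a^{\dag }(\hat d_*{\bf e}))\ket {\bf f}$ with no boundary terms at all, i.e.\ the $\mathcal {L}$-part alone already reproduces the whole right-hand side; adding the genuinely nonzero term $Tr(\Delta (\ket {\bf f}\bra {\bf g})a({\bf h})a^{\dag }({\bf e}))$ then overshoots the identity. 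Already for $n=1$ one checks $Tr([\hat d,\ket {f}\bra {g}]a(h)a^{\dag }(e))=-\braket {d_*h,f}\braket {g,e}-\braket {h,f}\braket {g,d_*e}$, which is the full right-hand side, while $Tr(\Delta (\ket {f}\bra {g})a(h)a^{\dag }(e))=f(0)\overline {g(0)}\braket {h,e}\neq 0$ in general.

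Consequently your second paragraph has nothing left to cancel: the boundary contributions you plan to harvest by integration by parts ``from the $\mathcal {L}$ piece'' were silently discarded in the cyclicity step, so the two halves of the proposal are inconsistent with each other. The paper's route avoids this: it starts from $Tr(\mathcal {L}(\ket {\bf f}\bra {\bf g})a({\bf h})a^{\dag }({\bf e}))=\braket {a^{\dag }({\bf h}){\bf g},a^{\dag }({\bf e})\hat d{\bf f}}+\braket {a^{\dag }({\bf h})\hat d{\bf g},a^{\dag }({\bf e}){\bf f}}$, with the derivative genuinely sitting on ${\bf g}$ inside the bra, uses the Leibniz rule $a^{\dag }({\bf e})\hat d{\bf f}=\hat d({\bf e}\Lambda {\bf f})-(\hat d{\bf e})\Lambda {\bf f}$ together with $\hat d_*{\bf e}=-\hat d{\bf e}$, and only then integrates by parts once; since $e_j(0)=h_k(0)=0$, the boundary terms arise solely from the $f_j$--$g_k$ pairings and are exactly $-Tr(\Delta (\ket {\bf f}\bra {\bf g})a({\bf h})a^{\dag }({\bf e}))$. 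Your commutator lemmas can still be used to tidy the bulk terms, but only after the generator is written correctly and the single integration by parts has produced the $f_j(0)\overline {g_k(0)}$ terms --- not in place of that bookkeeping.
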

\begin{proof}
The trace
$$
Tr((\mathcal {L}(\ket {\bf f}\bra {\bf g})a({\bf h})a^{\dag }({\bf e}))
$$
can be represented as a sum of elements given as a multiplication of inner products of $h_l,e_m$ and $f_j,g_k$ as well as the derivatives of $f_j,g_k$ such that only at least one of them could contain a derivative of the following possible forms
\begin{equation}\label{sl1}
\braket {f_j'|e_k},\ \braket {f_j'|h_k},\ \braket {e_k|g_j'},\ \braket {h_k|g_j'}
\end{equation}
or
\begin{equation}\label{sl2}
\braket {f_j'|g_k},\ \braket {f_j|g_k'}.
\end{equation}
If (\ref {sl1}) is implemented, then the derivative can be passed to the other side because $e _j,h _k\in D(d_*)$
resulting in $e_j(0)=h_k(0)=0$, e.g. $\braket {f_j'|e_k}=-\braket {f_j|e_k'}$. Taking integration by parts in
(\ref {sl2}) we obtain the term outside the integral of the form $f_j(0)\overline {g_k}(0)$ but it is self destructing with
the corresponding term in 
$$
Tr(\Delta (\ket {\bf f}\bra {\bf g})a({\bf h})a^{\dag }({\bf e})).
$$

More formally, 
\begin{equation}\label{anar1}
Tr(\mathcal {L}(\ket {\bf f}\bra {\bf g})a({\bf h})a^{\dag }({\bf e}))=\braket {a^{\dag }({\bf h}){\bf g},a^{\dag }({\bf e})\hat d{\bf f}}+\braket {a^{\dag }({\bf h})\hat d{\bf g},a^{\dag }({\bf e}){\bf f}}
\end{equation}
The first term in (\ref {anar1}) can be rewritten as
\begin{equation*}
\braket {a^{\dag }({\bf h}){\bf g},a^{\dag }({\bf e})\hat d{\bf f}}=\sum \limits _{j=1}^n\braket {{\bf h}\Lambda {\bf g},
{\bf e}\Lambda f_1\Lambda \dots f_{j-1}\Lambda df_j\Lambda f_{j+1}\Lambda \dots \Lambda f_n}=
\end{equation*}
\begin{equation*}
\braket {{\bf h}\Lambda {\bf g},\hat d({\bf e}\Lambda {\bf f})}-\braket {
{\bf h}\Lambda {\bf g},(\hat d{\bf e})\Lambda {\bf f}}=
\end{equation*}
\begin{equation}\label{anar2}
\braket {{\bf h}\Lambda {\bf g},\hat d({\bf e}\Lambda {\bf f})}+\braket {a^{\dag }({\bf h}){\bf g},a^{\dag }(\hat d_*{\bf e}){\bf f}}
\end{equation}
because $\hat d_*{\bf e}=-\hat d{\bf e}$ if ${\bf e}\in D(\hat d_*)$. Integrating by parts the first term in (\ref {anar2}) we obtain
\begin{equation*}
\braket {{\bf h}\Lambda {\bf g},\hat d({\bf e}\Lambda {\bf f})}=\braket {(\hat d_*{\bf h})\Lambda {\bf g},{\bf e}\Lambda {\bf f}}-
\braket {{\bf h}\Lambda (\hat d{\bf g}),{\bf e}\Lambda {\bf f}}-
\end{equation*}
\begin{equation}\label{anar3}
\sum \limits _{j,k}(-1)^{j+k}f_j(0)\overline {g}_k(0)\braket {{\bf h}\Lambda {\bf g}_{\backslash k},{\bf {e}\Lambda {\bf f}}_{\backslash j}}
\end{equation}
due to $e_j(0)=h_k(0)=0$ in virtue of ${\bf e},{\bf h}\in D(\hat d_*)$. Substituting (\ref {anar3}) to (\ref {anar2}) we get
\begin{equation*}
\braket {a^{\dag }({\bf h}){\bf g},a^{\dag }({\bf e})\hat d{\bf f}}=\braket {a^{\dag }(\hat d_*{\bf h}){\bf g},a^{\dag }({\bf e}){\bf f}}-\braket {a^{\dag }({\bf h})\hat d{\bf g},a^{\dag }({\bf e}){\bf f}}-
\end{equation*} 
\begin{equation}\label{anar4}
Tr(\Delta (\ket {{\bf f}}\bra {{\bf g}})a({\bf h})a^{\dag }({\bf e}))+\braket {a^{\dag }({\bf h}){\bf g},a^{\dag }(\hat d_*{\bf e}){\bf f}}
\end{equation}
Comparing (\ref {anar1}) and (\ref {anar4}) completes the proof.

\end{proof}

Let us define the map ${\mathcal M}_{*}$ on measurable sets on ${\mathbb R}_+$ with values in the set of linear maps defined on rank one operators $\ket {\bf f}\bra {\bf g}$ by the formula
\begin{equation}\label{M}
{\mathcal M }_*([t,s))(\ket {\bf f}\bra {\bf g})=\sum \limits _{j,k}(-1)^{j+k}\int \limits _t^sdr
f_j(r)\overline {g_k}(r)\Psi _r(\ket {{\bf f}_{\backslash j}}\bra {{\bf g}_{\backslash k}}),
\end{equation}
$$
{\mathcal M} _*([t,s))(\ket {0}\bra {\bf f})={\mathcal M} _*([t,s))(\ket {\bf f}\bra {0})={\mathcal M} _*([t,s))(\ket {0}\bra {0})=0.
$$
Notice that formally
$$
{\mathcal M}_*([t,s))=\int \limits _t^sdr\Delta \circ \Psi _r,
$$
where $\Delta $ is defined by (\ref {Delta}). Moreover,
\begin{equation}\label{mera}
\lim \limits _{t\to 0}{\mathcal M}_*([0,t))(\ket {\bf f}\bra {\bf g})=\Delta (\ket {\bf f}\bra {\bf g})
\end{equation}
for any choice of $f_j,g_k\in D(d)$.

\begin{proposition}\label{cpi}
The map $\mathcal M$ conjugate to (\ref {M}) is the measure covariant with respect to $\Phi $.
\end{proposition}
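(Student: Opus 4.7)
The plan is to verify, for $\mathcal M_*([t,s))$ defined by (\ref{M}), the properties (i) complete positivity, (ii) $Tr(\mathcal M_*([t,s))(\rho))\le Tr(\rho)$ for $\rho\ge 0$, (iii) $\sigma$-additivity of $\mu_{\rho,x}$, and (iv) the predual intertwining $\mathcal M_*([t,s))\circ\Psi_u=\mathcal M_*([t+u,s+u))$; dualising then delivers the axioms of Section~3 together with (\ref{trans2}). As remarked preceding the proposition, the identities $(S_r^*f)(0)=f(r)$ and $\hat S_r^*{\bf f}_{\backslash j}=(\hat S_r^*{\bf f})_{\backslash j}$ show that the integrand of (\ref{M}) equals $(\Delta\circ\Psi_r)(\ket{\bf f}\bra{\bf g})$, so $\mathcal M_*([t,s))=\int\limits_t^s\Delta\circ\Psi_r\,dr$ in the Bochner sense on the dense subspace spanned by rank-one operators $\ket{\bf f}\bra{\bf g}$ with tensor factors in $D(d)$.

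With this integrated form in hand, covariance (iv) is the straightforward step. Substituting $\Psi_u(\ket{\bf f}\bra{\bf g})=\ket{\hat S_u^*{\bf f}}\bra{\hat S_u^*{\bf g}}$ into (\ref{M}), using $(S_u^*f_j)(r)=f_j(r+u)$ and the semigroup identity $\Psi_r\circ\Psi_u=\Psi_{r+u}$, and changing variables $r'=r+u$ in the integral yields $\mathcal M_*([t+u,s+u))(\ket{\bf f}\bra{\bf g})$; extension by linearity and density then closes (iv). Property (iii) follows from $\sigma$-additivity of the Lebesgue integral in $r$.

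The main obstacle is a clean justification of (i) and (ii), since $\Delta$ uses evaluation at $0$, which is not $L^2$-continuous. The device is to write $\Delta(\ket{\bf f}\bra{\bf g})=\ket{a(\chi_0){\bf f}}\bra{a(\chi_0){\bf g}}$ with the formal singular vector $a(\chi_0){\bf f}:=\sum_j(-1)^{j+1}f_j(0){\bf f}_{\backslash j}$; this exhibits each $\Delta\circ\Psi_r$ as a single-Kraus CP map on the rank-one subspace. Then for any positive block $[\ket{{\bf f}_i}\bra{{\bf f}_j}]_{i,j}$ the image $[(\Delta\circ\Psi_r)(\ket{{\bf f}_i}\bra{{\bf f}_j})]_{i,j}=[\ket{a(\chi_0)\hat S_r^*{\bf f}_i}\bra{a(\chi_0)\hat S_r^*{\bf f}_j}]_{i,j}$ is a positive matrix for each $r$, and integration over $[t,s)$ preserves this, giving complete positivity. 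For (ii), the identity $Tr(\mathcal M_*([t,s))(\ket{\bf f}\bra{\bf f}))=\int\limits_t^s||a(\chi_0)\hat S_r^*{\bf f}||^2\,dr$ combined with integration by parts (using that $\hat S_r^*$ is a contraction semigroup and that alternating signs produce pairwise cancellation of cross terms) collapses the right side to $||\hat S_t^*{\bf f}||^2-||\hat S_s^*{\bf f}||^2\le||{\bf f}||^2$, yielding the bound.

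Dualising the verified properties to the adjoint $\mathcal M([t,s))$ on $B(F(H))$ then delivers the three axioms of a measure from Section~3 together with covariance (\ref{trans2}), completing the proof.
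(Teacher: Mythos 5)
Your proposal is correct in substance but reaches the key properties by a genuinely different route than the paper. The paper never works with the singular evaluation functional directly: it regularizes it by genuine annihilation operators, setting $\Xi _*(\delta )(\rho )=a(\chi _{[0,\delta ]})\rho a^{\dag }(\chi _{[0,\delta ]})$, forms the Riemann-sum approximants $\Sigma _n=\sum _{j=1}^n\Xi _*\bigl (\frac {s-t}{n}\bigr )\circ \Psi _{(s-t)j/n}$ with bounded Kraus operators $V_j=a(\chi _{[0,\frac {s-t}{n}]})\hat S^*_{\frac {(s-t)j}{n}}$, bounds $Q_n=\sum _jV_j^*V_j$ through $\|a^{\dag }(\chi _{[0,\delta ]})a(\chi _{[0,\delta ]})\|=\delta $, and obtains ${\mathcal M}_*([t,s))$ as a limit of these manifestly completely positive, trace-bounded maps; covariance and $\sigma $-additivity are left implicit there. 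You instead keep the singular one-Kraus form $\Delta (\ket {\bf f}\bra {\bf g})=\ket {a(\chi _0){\bf f}}\bra {a(\chi _0){\bf g}}$ exactly, deduce complete positivity of $\Delta \circ \Psi _r$ on the wedge span pointwise in $r$ and integrate, and prove the subtrace property from the exact identity $\int _t^s\|a(\chi _0)\hat S_r^*{\bf f}\|^2\,dr=\|\hat S_t^*{\bf f}\|^2-\|\hat S_s^*{\bf f}\|^2$, while verifying covariance and $\sigma $-additivity explicitly. What each buys: your identity gives a sharp bound uniform in $s-t$ with no limiting procedure (the paper's estimate $Q_n<(s-t)I$, read literally, gives trace non-increase only when $s-t\le 1$), and it makes the covariance part of the statement explicit; the paper's approximation, on the other hand, exhibits ${\mathcal M}_*$ as a limit of physical absorption operations on small intervals, which is exactly what is invoked in the remark following Theorem 2.

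Two points to tighten. First, the trace identity does not come from ``pairwise cancellation of cross terms'': with $G_{jk}(r)=\braket {S_r^*f_j,S_r^*f_k}$ one has $\dot G_{jk}(r)=-\overline {f_j(r)}f_k(r)$, and Jacobi's formula gives $-\frac {d}{dr}\det G(r)=\sum _{j,k}(-1)^{j+k}\overline {f_k(r)}f_j(r)\det G_{\backslash k,\backslash j}(r)=\|a(\chi _0)\hat S_r^*{\bf f}\|^2$; the cross terms survive as off-diagonal cofactors, they are not cancelled. Second, complete positivity and the trace bound are established only on the dense span of wedge rank-one operators with entries in $D(d)$; add the sentence that a completely positive, trace-non-increasing map on this span is a contraction in trace norm there, hence extends uniquely to $\mathfrak {S}_1(F(H))$, so that the adjoint ${\mathcal M}([t,s))$ on $B(F(H))$ is well defined before dualizing the remaining properties.
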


\begin{proof}
 
Analogously to the proof of Proposition \ref {ups3} let us define a completely positive map by the formula
\begin{equation}\label{r1}
\Xi _*(\delta )(\rho )=a(\chi _{[0,\delta ]})\rho a^{\dag }(\chi _{[0,\delta ]}),\ \rho \in \mathfrak {S}_1(H),
\end{equation}
where $\chi _{[0,\delta ]}$ is a characteristic function of the segment $[0,\delta ]$ and $\delta >0$. 
Given $0\le t<s$ and integer $n$ consider the auxiliary completely positive map
\begin{equation}\label{r2} 
{\Sigma}_{n}=\sum \limits _{j=1}^n\Xi _*\left (\frac {s-t}{n}\right )\circ \Psi _{\frac {(s-t)j}{n}}.
\end{equation}
For the Kraus operators $V_{j}=a(\chi _{[0,\frac {s-t}{n}]})\hat S^*_{\frac {(s-t)j}{n}}$ of ${\Sigma}_{n}$ let us examine the sum
$$
Q_n=\sum \limits _{j=1}^nV_{j}^*V_{j}=\sum \limits _{j=1}^n\hat S_{\frac {(s-t)j}{n}}a^{\dag }(\chi _{[0,\frac {s-t}{n}]})a(\chi _{[0,\frac {s-t}{n}]})\hat S^*_{\frac {(s-t)j}{n}}.
$$
Taking into account
$$
||a^{\dag }(\chi _{[0,\frac {s-t}{n}]})a(\chi _{[0,\frac {s-t}{n}]})||=\frac {s-t}{n}
$$
we obtain
$$
Q_n<(s-t)I.
$$
It follows that ${\Sigma }_n$ is non-increasing a trace.
Hence
\begin{equation}\label{r3}
{\mathcal M}_*([t,s))=\lim \limits _{n\to +\infty }{\Sigma }_n
\end{equation} 
is a completely positive map and $Tr({\mathcal M}_*([t,s))(\rho ))\le Tr(\rho)$ for all positive $\rho \in \mathfrak {S}_1(H)$.

\end{proof}

The flow of shifts $\check {\Phi }_t$ determined by (\ref {powers}) has the generator $\check {\mathcal L}$ acting by the formula
\begin{equation}\label{gen}
\check {\mathcal L}(a({\bf h})a^{\dag }({\bf e}))=a(\hat d_*{\bf h})a^{\dag }({\bf e})+a({\bf h})a^{\dag }(\hat d_*{\bf e})
\end{equation}
for any $h_j,e_k\in D(d_*)$. On the other hand, it follows from Theorem 1 that the generator $\check {\mathcal L}_*$ of the preadjoint semigoup $\check \Psi _t=(\check \Phi _t)_*$ is equal to ${\mathcal L}+\Delta $.
In the next theorem we show that $\check {\Phi }_t$ satisfies the integral equation.

\begin{theorem}\label {2}
The flow of shifts (\ref {powers}) is a solution to the integral equation
\begin{equation}\label{main}
\check {\Phi }_t-\int \limits _0^t{\mathcal M}(ds)\circ \check {\Phi }_{t-s}=\Phi _t,\ t\ge 0,
\end{equation}
where ${\mathcal M}([t,s)):B(F(H))\to B(F(H))$ is the measure determined in Proposition {\ref {cpi}}.
\end{theorem}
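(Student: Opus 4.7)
The plan is to reduce Theorem~\ref{2} to Theorem~\ref{1} by passing to the preadjoint side and invoking the equivalence between the differential form~(\ref{integral}) and the integral form~(\ref{integral2}) recalled in the Preliminaries. Taking the preadjoint of both sides of~(\ref{main}) and using that the preadjoint of $\mathcal{M}([t,s))\circ \check{\Phi}_{t-s}$ is $\check{\Psi}_{t-s}\circ \mathcal{M}_*([t,s))$, I would first rewrite the claim as
$$
\check{\Psi}_t(\rho)-\int_0^t \check{\Psi}_{t-s}\circ \mathcal{M}_*(ds)(\rho)=\Psi_t(\rho),\quad \rho\in \mathfrak{S}_1(F(H)),
$$
with $\check{\Psi}_t=(\check{\Phi}_t)_*$ and $\mathcal{M}_*$ as in~(\ref{M}).

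I would then restrict to $\rho=\ket{\mathbf{f}}\bra{\mathbf{g}}$ with $f_j,g_k\in D(d)$. These vectors lie in $D(\mathcal{L})$, and in view of the limit~(\ref{mera}) identifying $\Delta$ as the derivative of $\mathcal{M}_*$ at zero, the Holevo~\cite{Holevo1995} equivalence reduces the preadjoint integral equation to the Cauchy problem
$$
\frac{d}{dt}\check{\Psi}_t(\rho)=(\mathcal{L}+\Delta)(\check{\Psi}_t(\rho)),\quad \check{\Psi}_0(\rho)=\rho.
$$
By the semigroup property this further reduces to verifying at $t=0$ that the generator $\check{\mathcal{L}}_*$ of $\check{\Psi}_t$ coincides with $\mathcal{L}+\Delta$ on the chosen dense class.

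This generator identification is precisely the content of Theorem~\ref{1}: pairing $(\mathcal{L}+\Delta)(\ket{\mathbf{f}}\bra{\mathbf{g}})$ with the test observable $a(\mathbf{h})a^{\dag}(\mathbf{e})$ (for $h_j,e_k\in D(d_*)$) reproduces $\bra{\mathbf{f}}(a(\hat d_*\mathbf{h})a^{\dag}(\mathbf{e})+a(\mathbf{h})a^{\dag}(\hat d_*\mathbf{e}))\ket{\mathbf{g}}$, which by formula~(\ref{gen}) equals $\mathrm{Tr}(\ket{\mathbf{f}}\bra{\mathbf{g}}\,\check{\mathcal{L}}(a(\mathbf{h})a^{\dag}(\mathbf{e})))$. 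Since monomials built from vectors in $D(d_*)$ are weak-$*$ dense in $B(F(H))$, this trace equality forces $\check{\mathcal{L}}_*(\ket{\mathbf{f}}\bra{\mathbf{g}})=(\mathcal{L}+\Delta)(\ket{\mathbf{f}}\bra{\mathbf{g}})$.

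The main obstacle I anticipate is the core bookkeeping: one needs that the span of such rank-one operators is a core for $\check{\mathcal{L}}_*$, so that the Cauchy problem extends from this span to all of $D(\check{\mathcal{L}}_*)$. I would handle this by observing that $\check{\Psi}_t$ preserves the span (shifting each argument by $S_t^*$, which maps $D(d)$ into itself) and that the span is dense in $\mathfrak{S}_1(F(H))$; the standard invariant-subspace criterion for $C_0$-semigroups then identifies it as a core. With the generator identity in force on a core, the differential Cauchy problem holds on the full domain of $\check{\mathcal{L}}_*$, the preadjoint integral equation follows from~\cite{Holevo1995}, and taking the adjoint restores~(\ref{main}).
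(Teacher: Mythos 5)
Your overall strategy coincides with the paper's at its core: both proofs reduce the integral equation (\ref{main}) to the generator-level identity supplied by Theorem 1. The paper does this by pairing the left-hand side of (\ref{main}) with $\ket{\bf f}\bra{\bf g}$ against the monomial $a({\bf h})a^{\dag }({\bf e})$ and differentiating at $t=0$, using (\ref{mera}) and (\ref{gen}); you do it by passing to the preadjoint equation and invoking the Holevo equivalence of (\ref{integral}) and (\ref{integral2}) together with the identification $\check{\mathcal L}_*={\mathcal L}+\Delta$ on rank-one operators. Up to this repackaging the two arguments are the same, and your dualized form of (\ref{main}) agrees with the paper's (\ref{anar5}).

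The genuine weak point is the invariance claim on which your core argument rests. You assert that $\check\Psi _t$ preserves the span of the rank-one operators because it "shifts each argument by $S_t^*$". That describes $\Psi _t=(\Phi _t)_*$, i.e.\ conjugation by $\hat S_t^*$, not the preadjoint of the flow of shifts: $\check\Phi _t$ is a unital endomorphism with range generated by $a(S_tf)$, and under the factorization $F(H)\cong F(L^2(0,t))\otimes F(L^2(t,+\infty ))$ its preadjoint acts as a partial trace over the factor $F(L^2(0,t))$ followed by the shift identification. For instance, $\check\Psi _t(\ket f\bra f)=\|\chi _{[0,t)}f\|^2\,\ket 0\bra 0+\ket{S_t^*f}\bra{S_t^*f}$, which is not obtained by applying $S_t^*$ entrywise. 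The span of such rank-one operators (vacuum terms included) is in fact plausibly invariant --- expand each $f_j$ as $\chi _{[0,t)}f_j+\chi _{[t,+\infty )}f_j$ and trace out the first factor, which produces a finite sum of rank-one operators with entries $S_t^*f_j\in D(d)$ --- but that is a different argument, and as written your invariant-subspace/core step rests on a mis-identification of the semigroup. Relatedly, both you and the paper tacitly assume $\ket{\bf f}\bra{\bf g}\in D(\check{\mathcal L}_*)$, whereas Theorem 1 only yields the weak identity against the monomials $a({\bf h})a^{\dag }({\bf e})$; if you want the Cauchy-problem formulation on a core, the upgrade from this weak statement to trace-norm differentiability of $t\mapsto \check\Psi _t(\ket{\bf f}\bra{\bf g})$ should at least be addressed.
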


\begin{proof}

The solution to (\ref {main}) exists and defines a dynamical semigroup due to Proposition \ref {1}. 
Apply the left hand side of (\ref {main}) to $a({\bf h})a^{\dag }({\bf e})$, multiply to $\ket {\bf f}\bra {\bf g}$ and take a trace, then
$$
Tr(\ket {\bf f}\bra {\bf g}\check {\Phi }_t(a({\bf h})a^{\dag }({\bf e})))-\int \limits _0^tTr(\ket {\bf f}\bra {\bf g}{\mathcal M}(ds)\circ \check {\Phi }_{t-s}(a({\bf h})a^{\dag }({\bf e})))=
$$ 
\begin{equation}\label{anar5}
Tr(\ket {\bf f}\bra {\bf g}\check {\Phi } _t(a({\bf h})a^{\dag }({\bf e})))-\int \limits _0^tTr(\check {\Psi }_{t-s}\circ {\mathcal M}_*(ds)(\ket {\bf f}\bra {\bf g})a({\bf h})a^{\dag }({\bf e}))
\end{equation}
Suppose that $f_j,g_k\in D(d)$ and $h_j,e_k\in D(d_*)$. Taking the derivative at zero from (\ref {anar5}) we obtain
$$
Tr(\ket{\bf f}\bra {\bf g}(a(\hat d_*{\bf h})a^{\dag }({\bf e})+a({\bf h})a^{\dag }(\hat d_*{\bf e})))-Tr(\Delta (\ket{\bf f}\bra {\bf g})a({\bf h})a^{\dag }({\bf e}))
$$
due to (\ref {mera}) and (\ref {gen}). Now the result follows from Theorem 1.

\end{proof}

Taking into account (\ref {r1}), (\ref {r2}) and (\ref {r3}) we can conclude that for getting $\check {\Phi }_t$ to be unital the measure $\mathcal M$ creates a particle with the wave function $\chi _{r,r+dr}$ at each time moment $r$ preceding $t$.

\section{Conclusion}

We consider perturbations of a dynamical semigroup $\Phi $ on the algebra of all bounded operators determined by solutions of integral equations with respect to measures $\mathcal M$ on the semi-axis ${\mathbb R}_+$ with values in the set of completely positive maps which is covariant with respect to $\Phi $ such that $\Phi _r\circ {\mathcal M}([t,s))={\mathcal M}([t+r,s+r)),\ s,t,r\ge 0$. As an example we construct the perturbation of the semigroup of non-unital *-endomorphisms on the algebra of canonical anticommutation relations resulting in the flow of shifts.

\section*{Acknowledgments} This work is supported by the Russian Science Foundation under grant 19-11-00086.

\end{document}